\documentclass{article}
\usepackage[margin=1in]{geometry}
\usepackage{setspace}
\usepackage{graphicx} 
\usepackage{subcaption}
\usepackage{amsmath,amsthm,amssymb}
\usepackage{bm}
\usepackage{color,hyperref}

\newtheorem{theorem}{Theorem}
\newtheorem{lemma}[theorem]{Lemma}

\newtheorem{corollary}[theorem]{Corollary}
\newtheorem{definition}[theorem]{Definition}
\newtheorem{remark}{Remark}

\newcommand{\R}{\mathbb{R}}
\newcommand{\W}{\mathcal{W}}
\newcommand{\x}{\mathbf{x}}
\newcommand{\w}{\mathbf{w}}
\renewcommand{\v}{\mathbf{v}}

\newcommand{\s}{\mathbf{s}}
\renewcommand{\c}{\mathbf{c}}
\renewcommand{\u}{\mathbf{u}}

\newcommand{\z}{\mathbf{z}}

\renewcommand{\vec}[1]{{\bf #1}}
\newcommand{\vecrho}{{\bm\rho}}
\newcommand{\vecmu}{{\bm\mu}}

\newcommand{\cl}{\operatorname{cl}}
\newcommand{\diag}{\operatorname{diag}}

\DeclareMathOperator*{\argmin}{arg\,min}

\title{Mandate without Managers: Automated Market Makers as Verifiable Portfolio Products}
\author{Zachary Feinstein\thanks{Stevens Institute of Technology, School of Business, Hoboken, NJ 07030. {\tt zfeinste@stevens.edu}.} 
\and 
Ionut Florescu\thanks{Stevens Institute of Technology, School of Business, Hoboken, NJ 07030.} 
\and 
Sean O'Leary\thanks{Stevens Institute of Technology, School of Business, Hoboken, NJ 07030.}}
\date{\today}

\begin{document}

\maketitle
\begin{abstract}
Automated market makers (AMMs) are typically interpreted and evaluated as decentralized exchanges. Herein, we take the perspective envisioned by \emph{Balancer} that an AMM can also be viewed as a portfolio technology that programmatically enforces an economic mandate. In particular, we follow the geometric mean market maker (G3M) invariant employed by that protocol in order to enforce a target-weighted portfolio. We introduce a multi-asset fee structure to the G3M under which competitive arbitrage implements a band-rebalancing strategy with mis-weighting bounded ex ante, allowing compliance with the mandate to be verified directly from the pool's observable holdings. We then compare simulated G3M portfolios against the realized performance of VBIAX, EQL, and EDOW on annualized returns and tracking error against the portfolio mandate. Across these historical case studies, and using arbitrage-only order flow, the G3M is found to outperform the incumbent funds in both metrics for certain fee ranges.
\end{abstract}

\section{Introduction}\label{sec:intro}

Target-weight portfolio mandates are among the most common products in traditional asset management. Financial advisors routinely recommend allocations such as 60\% equity and 40\% fixed income. Certain such positions are so common that they are offered as mutual funds and exchange-traded funds (ETFs). For these funds, the product is \emph{both} the asset holdings \emph{and} the value of the commitment to maintain the stated allocation over time. Unlike the better known market cap weighted ETFs, these target weighted ETFs require active rebalancing which can be costly to achieve~\cite{el2023rebalancing}. 

By following a mathematical invariant, an automated market maker (AMM) offers a programmatic way to implement such a commitment. Though an AMM is typically interpreted as a rule governing the terms at which assets can be exchanged~\cite{angeris2020improved,angeris2023geometry}, the same automation also determines the composition of the reserves held on behalf of liquidity providers (LPs). In this sense, an AMM is simultaneously an exchange and a portfolio rule. From the portfolio perspective, the role of the arbitrageur changes to that of an external rebalancing agent who is compensated by the arbitrage profits extracted from the LPs. The AMM pool, therefore, maintains its mandate algorithmically and without managerial discretion.
The portfolio interpretation of an AMM is particularly direct for the geometric mean market maker (G3M) as this AMM encodes a target-weighted portfolio~\cite{evans2021liquidity,evans2021optimal}. This viewpoint has been commercialized by \emph{Balancer}~\cite{martinelli2019non} and \emph{QuantAMM}~\cite{willetts2024optimal,willetts2026pools} for crypto-assets.

We explore the G3M as a portfolio rebalancing technology both theoretically and empirically.
Section~\ref{sec:g3m} provides background on the G3M and introduces a novel fee structure under which competitive arbitrage constrains the realized portfolio weights to an explicitly characterized neighborhood of the target weights. This permits the fee-bearing G3M to be interpreted as a band-rebalanced target-weight strategy. This extends the two-asset setting of~\cite{evans2021optimal} (under a modified fee structure) to provide a closed-form characterization of the fee-induced no-arbitrage region in \emph{weight space} for a multi-asset G3M.
Section~\ref{sec:cs} compares simulated G3M portfolios with real funds to empirically assess whether AMMs can serve as a practical portfolio technology when competing against incumbent funds. This builds upon the work of~\cite{willetts2024rebalancing} to consider more realistic benchmarks against which to evaluate AMM performance. As far as we are aware, we provide the first direct comparison between an AMM-held portfolio and traditional mutual funds and ETFs. 
Across three case studies, we find that the G3M, in certain parameter ranges under purely adversarial order flow, can dominate the incumbent fund in both returns and tracking error against the economic mandate.

\section{Geometric Mean Market Maker}\label{sec:g3m}
Within this section we, first, provide a background on the G3M without fees (Section~\ref{sec:g3m-bg}). In doing so, we highlight the viewpoint that an AMM pool is, fundamentally, a portfolio rebalancing strategy. In the case of the G3M, this viewpoint was presented and supported in~\cite{evans2021liquidity} from which we draw much of this background section. Second, in Section~\ref{sec:g3m-fees}, we introduce a fee-earning variant of the G3M which satisfies the path-independence properties of~\cite{bichuch2026axioms}. Under this design, we study the arbitrageurs' optimization problem in order to characterize the no-arbitrage region. In particular, the arbitrage exists if and only if the portfolio is sufficiently mis-weighted in at least one component.

\subsection{Background: Fee-Free Setting}\label{sec:g3m-bg}
Following the discussion in, e.g.,~\cite{evans2021liquidity}, consider a liquidity pool consisting of $N$ assets with reserves $\x := (x_1,...,x_N)^\top \in \R^N_{++}$ and weights $\w := (w_1,...,w_N)^\top \in \W := \{\w \in \R^N_{++} \; | \; \sum_{i = 1}^N w_i = 1\}$. A Geometric Mean Market Maker (G3M) is an AMM defined by the invariant
\begin{equation}\label{eq:g3m}
L = f_\w(\x) := \prod_{i = 1}^N x_i^{w_i}
\end{equation}
where $L > 0$ denotes the number of LP shares. Any swap $\Delta\x \in \prod_{i = 1}^N (-x_i,\infty)$ at this G3M is done in such a way that $f_\w(\x+\Delta\x) = L$; it is for this reason that~\eqref{eq:g3m} is referred to as the invariant. Importantly, an LP share entitles the holder to the proportionate share of the underlying reserves $\x$ via an invariant level update ($L' = (1+\alpha)L = f_\w((1+\alpha)\x)$ with proportional deposit or withdrawal $\alpha > -1$). For this reason we will often define $\x/L$ as the normalized portfolio held by the AMM pool for LPs.

Because the invariant~\eqref{eq:g3m} defines how swaps can be executed, the marginal price $p_{ij} > 0$ of any asset $i$ in terms of asset $j$ in the pool can be found via implicit differentiation~\cite{angeris2020improved}. Specifically, given the reserves $\x$, this marginal price is given by
\begin{equation}\label{eq:g3m-price}
p_{ij} := \frac{w_i x_j}{w_j x_i}.
\end{equation}
If we impose external (dollar-denominated) prices $\s := (s_1,...,s_N)^\top \in \R^N_{++}$ on these assets, then the arbitrageurs would seek to maximize value extracted from the pool, i.e., 
\[\max_{\Delta\x \in \prod_{i = 1}^N (-x_i,\infty)} \left\{-\s^\top \Delta\x \; | \; f_\w(\x+\Delta\x) = L\right\}.\]
Let the resulting (normalized) reserves be denoted by $\x^*(\s) := (\x+\Delta\x^*)/L$. 
\begin{lemma}\label{lemma:arb-0fee}
For any $\s \in \R^N_{++}$, the realized (normalized) portfolio holdings are provided by $\x_i^*(\s) = \frac{w_i}{s_i} V(\s)$ for any asset $i$ where $V(\s) := \prod_{i = 1}^N (s_i/w_i)^{w_i}$ is the normalized portfolio value (i.e., $V(\s) = \s^\top \x^*(\s)$).
\end{lemma}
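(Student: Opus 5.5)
We reformulate the arbitrageur's problem in terms of the post-trade reserves $\y := \x + \Delta\x \in \R^N_{++}$. Since $-\s^\top\Delta\x = \s^\top\x - \s^\top\y$ and $\s^\top\x$ is fixed, maximizing extracted value is the same as solving
\[\min_{\y \in \R^N_{++}} \left\{\s^\top \y \; | \; f_\w(\y) = L\right\}.\]
Because $f_\w$ is positively homogeneous of degree one, we may divide by $L$ and work with $\z := \y/L$. The problem becomes $\min\{\s^\top\z \;|\; f_\w(\z) = 1,\ \z \in \R^N_{++}\}$, whose minimizer is by definition $\x^*(\s)$. The plan is to solve this reduced problem in closed form.

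\textbf{Existence and uniqueness.} I would take logarithms, so that the constraint reads $\sum_i w_i \log z_i = 0$. Substituting $z_i = e^{u_i}$ turns the problem into minimizing the strictly convex function $\sum_i s_i e^{u_i}$ over the affine hyperplane $\sum_i w_i u_i = 0$. This function is coercive on that hyperplane. Indeed, if $\|\u\| \to \infty$ while $\sum_i w_i u_i = 0$, then some $u_i \to +\infty$, because not all coordinates can go to $-\infty$ while their weighted sum stays zero. Hence a unique minimizer exists, and the first-order (KKT) conditions are both necessary and sufficient.

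\textbf{First-order conditions.} The Lagrangian conditions give $s_i z_i = \lambda w_i$ for every $i$, so $z_i = \lambda w_i / s_i$. Substituting into the constraint and using $\sum_i w_i = 1$ gives
\[1 = \prod_i (\lambda w_i/s_i)^{w_i} = \lambda \prod_i (w_i/s_i)^{w_i},\]
so $\lambda = \prod_i (s_i/w_i)^{w_i} = V(\s)$. Therefore $x_i^*(\s) = \frac{w_i}{s_i}V(\s)$. Finally, $\s^\top\x^*(\s) = \sum_i w_i V(\s) = V(\s)$, which confirms that $V(\s)$ is the normalized portfolio value. An alternative, fully elementary argument is the weighted AM-GM inequality:
\[\s^\top\z = \sum_i w_i \frac{s_i z_i}{w_i} \ge \prod_i \left(\frac{s_i z_i}{w_i}\right)^{w_i} = V(\s)\, f_\w(\z) = V(\s),\]
with equality if and only if $s_i z_i / w_i$ is constant in $i$. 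This gives existence, uniqueness and the formula simultaneously, so I would probably present this version.

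\textbf{Main obstacle.} The only delicate step is justifying that the reduction is legitimate: the feasible set $\prod_i(-x_i,\infty)$ for $\Delta\x$ corresponds exactly to $\y \in \R^N_{++}$, and normalizing by $L$ commutes with the optimization by homogeneity. Once that is in place, the AM-GM bound settles everything.
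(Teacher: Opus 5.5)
Your proposal is correct. It differs from the paper mainly in that the paper gives no derivation: its proof just cites Equations (5)--(6) of \cite{evans2021liquidity}.

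The formula there comes from the standard price-matching argument. Arbitrage stops once the pool price $p_{ij}$ in \eqref{eq:g3m-price} equals $s_i/s_j$. This forces $s_i x_i/w_i$ to take a common value across $i$, and the invariant then fixes that value as $V(\s)$. Your Lagrange condition $s_i z_i = \lambda w_i$ is exactly this condition, with $\lambda$ as the common value. What you add is a proof that the stationary point is the unique global optimum (strict convexity plus coercivity after $z_i = e^{u_i}$), which price matching leaves implicit.

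Your log-substitution route is also the $\gamma = 0$ case of the paper's own KKT proof of Theorem~\ref{thm:tracking-fees}. There, stationarity in $t$ gives $\sum_i \mu_i = \lambda\gamma = 0$, so $\vecmu = \vec{0}$ and $\hat\w = \w$ (Corollary~\ref{cor:tracking}); the invariant then gives the holdings. Your version would therefore make the lemma self-contained and consistent with the paper's later machinery, instead of relying on an external source.

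The weighted AM-GM bound $\s^\top\z \geq V(\s) f_\w(\z)$ is the most economical choice. It gives the optimal value, existence, uniqueness and the closed form at once, with no convexity or compactness argument.

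If you keep the KKT version, tighten the coercivity step. The claim to prove is that $\max_i u_i \to \infty$ along any sequence on the hyperplane with $\|\u\| \to \infty$. This follows from $0 = \sum_i w_i u_i \leq w_j \min_i u_i + (1-w_j)\max_i u_i$, where $j$ is the minimizing index: if $\min_i u_i \to -\infty$, the maximum is forced to $+\infty$.
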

\begin{proof}
This follows directly from Equations (5) and (6) of~\cite{evans2021liquidity}.
\end{proof}

The key property of the G3M that we will take advantage of is that the weight of asset $i$ ($s_i\x_i^*(\s)/V(\s)$) in the portfolio $\x^*(\s)$ held by LPs is provided by $w_i$.
\begin{corollary}\label{cor:tracking}
For any $\s \in \R^N_{++}$, the realized portfolio weight of asset $i$ is equal to the G3M's weight. That is,
\[\frac{s_i x_i^*(\s)}{\s^\top \x^*(\s)} = w_i \quad \forall i.\]
\end{corollary}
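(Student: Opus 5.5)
The plan is to substitute the closed form from Lemma~\ref{lemma:arb-0fee} directly into the weight expression. By that lemma, for each asset $i$ we have $x_i^*(\s) = \frac{w_i}{s_i} V(\s)$ with $V(\s) = \prod_{j=1}^N (s_j/w_j)^{w_j}$. Multiplying by $s_i$ gives the numerator $s_i x_i^*(\s) = w_i V(\s)$.

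For the denominator, there are two equivalent routes. I would first use the identity $V(\s) = \s^\top \x^*(\s)$ stated in Lemma~\ref{lemma:arb-0fee}. To keep the argument self-contained, I would also verify it directly:
\[\s^\top \x^*(\s) = \sum_{j=1}^N s_j \frac{w_j}{s_j} V(\s) = V(\s) \sum_{j=1}^N w_j = V(\s),\]
using $\w \in \W$, so that $\sum_j w_j = 1$.

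Dividing, the ratio equals $w_i V(\s)/V(\s) = w_i$. This division is legitimate because $V(\s) > 0$: it is a finite product of positive numbers raised to real powers, since $\s \in \R^N_{++}$ and $\w \in \R^N_{++}$. There is no real obstacle here. The only point requiring care is this positivity of $V(\s)$ and the normalization $\sum_j w_j = 1$, both of which come straight from the standing assumptions on $\s$ and $\W$.
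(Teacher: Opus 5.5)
Your proof is correct and is the same argument as the paper's, which simply states that the identity follows directly from the closed forms of $\x^*(\s)$ and $V(\s)$ in Lemma~\ref{lemma:arb-0fee}. Your explicit check that $\s^\top\x^*(\s) = V(\s)\sum_j w_j = V(\s) > 0$ just writes out that one-line argument.
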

\begin{proof}
This perfect weight matching follows directly from the construction of $\x^*(\s)$ and $V(\s)$ in Lemma~\ref{lemma:arb-0fee}.
\end{proof}

\subsection{Bounded Mis-Weighting with Fees}\label{sec:g3m-fees}

The frictionless G3M framework described in Section~\ref{sec:g3m-bg} treats all feasible reserve updates through the same invariant $L = f_\w(\x)$. Once fees are introduced, however, different types of operations should be distinguished; in particular, proportional liquidity provision and redemption should remain frictionless, with fees only applied to the non-representative basket liquidity operations or direct swapping. In contrast to prior works (e.g.,~\cite{angeris2020improved,angeris2023geometry,bichuch2026axioms}), herein we introduce a single invariant system for both liquidity provision and swapping. This is accomplished by breaking any market operation into two pieces: (1) a shared proportional mint (or burn) component that scales all reserves uniformly and remains fee-free; and (2) the residual non-proportional adjustment that changes the portfolio composition and to which fees are applied.
Furthermore, following the discussion in~\cite{bichuch2026axioms}, we define a novel fee structure that prevents the strategic avoidance of fees while remaining path independent for pure trade splitting. We note that this construction is distinct from that utilized in prior works (e.g.,~\cite{evans2021optimal}).

Let $\x' := \x + \Delta\x$ denote the post-update reserve vector for initial reserves $\x \in \R^N_{++}$ and $\Delta\x \in \prod_{i = 1}^N (-x_i,\infty)$. We can define the proportional change in reserves by $\vecrho := (\rho_1,...,\rho_N)^\top = \diag(\x)^{-1}\Delta\x$. In this way, we can equivalently define $\x' = (I + \diag(\vecrho))\x$. With this convention, we can define the shared proportional component of the reserve update by $\alpha(\vecrho) := \min_i \rho_i$. The quantity $1 + \alpha(\vecrho) > 0$ is the largest proportional scaling common to every asset, and we interpret it as the shared mint/burn component of the update; the remaining, non-proportional component is interpreted as a residual swap and is the only part of the update to which fees should apply. In this way, letting $\tilde\rho_i := \frac{\rho_i - \alpha(\vecrho)}{1+\alpha(\vecrho)} \geq 0$ be the residual relative update after the proportional component, we find $\x' = (1+\alpha(\vecrho))(I + \diag(\tilde{\vecrho}))\x$. 

\begin{definition}\label{defn:g3m}
Fix a fee parameter $\gamma \in [0,1)$. Consider initial reserves $\x \in \R^N_{++}$ and LP shares $L > 0$ for a G3M with weights $\w \in \W$. Given a market operation $\Delta\x \in \prod_{i = 1}^N (-x_i,\infty)$, the updated liquidity is defined by $L' = L g_\w^\gamma(\diag(\x)^{-1}\Delta\x)$ for
\begin{equation}\label{eq:g}
g_\w^\gamma(\vecrho) := (1 + \alpha(\vecrho))^\gamma \prod_{i = 1}^N (1 + \rho_i)^{(1-\gamma)w_i} = (1+\alpha(\vecrho)) \prod_{i = 1}^N (1 + \tilde\rho_i)^{(1-\gamma)w_i}
\end{equation}
for any $\vecrho \in (-1,\infty)^N$. The resulting change in liquidity is
\begin{equation}\label{eq:ell}
\ell_\w^\gamma(\Delta\x;\x,L) := L [g_\w^\gamma(\diag(\x)^{-1}\Delta\x) - 1].
\end{equation}
\end{definition}

The change in liquidity $\ell_\w^\gamma(\Delta\x;\x,L)$ defined in~\eqref{eq:ell} provides the number of LP tokens minted (if positive) or burned (if negative) for the investor. A pure swap is the case in which $\ell_\w^\gamma(\Delta\x;\x,L) = 0$. Under this construction, the derived bid-ask matrix~\cite{schachermayer2004fundamental} is such that $\pi_{ij}(\x) = \frac{\gamma + (1-\gamma)w_j}{(1-\gamma)w_i} \frac{x_i}{x_j}$ is the (marginal) units of asset $i$ that must be paid to receive one (marginal) unit of asset $j \neq i$ (with $\pi_{ii}(\x) \equiv 1$ for any asset $i$). Since $\pi_{ij}(\x) \pi_{jk}(\x) > \pi_{ik}(\x)$ when $\gamma > 0$, the fees generate a strictly positive bid-ask spread which cannot be reduced by routing through an intermediate asset. This bid-ask spread makes clear that the fee structure introduced in Definition~\ref{defn:g3m} is distinct from the proportional fees applied in, e.g.,~\cite{evans2021optimal}.

\begin{remark}
The construction of the G3M with fees as provided in Definition~\ref{defn:g3m} satisfies three important consistency properties:
\begin{itemize}
\item If $\gamma = 0$ then $g_\w^0(\vecrho) = \prod_{i = 1}^N (1 + \rho_i)^{w_i}$ by construction. Recalling from~\eqref{eq:g3m} that $L = f_\w(\x)$, we recover $\ell_\w^0(\Delta\x;\x,f_\w(\x)) = f_\w(\x+\Delta\x) - f_\w(\x)$.
\item Fix $\vecrho \in (-1,\infty)^N$, then $\gamma \in [0,1) \mapsto g_\w^\gamma(\vecrho)$ is non-increasing. As a consequence, the cost of a market operation in terms of LP shares ($-\ell_\w^\gamma(\Delta\x;\x,L)$) is non-decreasing in the fee parameter; this difference from the fee-free $\gamma = 0$ case is exactly the fee charged for that operation which is distributed pro-rata to the existing LPs through the increased holdings of each normalized share.
\item Consider $\vecrho = \alpha\vec{1}$ for some constant $\alpha \in (-1,\infty)$. By construction $g_\w^\gamma(\vecrho) = 1+\alpha$. Substituting into~\eqref{eq:ell} gives $\ell_\w^\gamma(\alpha\x;\x,L) = L\alpha$, i.e., the exact proportional change in liquidity with no dependence on $\gamma$. Proportional minting and burning are therefore fee-free for every $\gamma \in [0,1)$.
\end{itemize}
\end{remark}

We call this the path-independent case as $g_\w^\gamma(\vecrho) = (1 + \alpha(\vecrho))^\gamma g_\w^0(\vecrho)^{(1-\gamma)}$ is simply a geometric interpolation between the fee-free G3M liquidity update and the maximal common proportional update. Due to the strong path-independence of the fee-free G3M (see, e.g.,~\cite{bichuch2026axioms}), we recover comparable properties with the fee so long as the minimizing component for $\alpha(\vecrho)$ is consistent for any decomposition of a market operation.

\begin{lemma}\label{lemma:path-independence}
Fix $\gamma \in [0,1)$ and $\w \in \W$. Let $\vecrho \in (-1,\infty)^N$ and let $(\vecrho^{(k)})_{k = 1}^K \subseteq (-1,\infty)^N$ be any decomposition of that market operation, i.e., $1 + \rho_i = \prod_{k = 1}^K (1 + \rho_i^{(k)})$ for every asset $i$. Then 
\begin{equation}\label{eq:path-independence}
g_\w^\gamma(\vecrho) \geq \prod_{k = 1}^K g_\w^\gamma(\vecrho^{(k)}).
\end{equation}
If $\gamma = 0$ then~\eqref{eq:path-independence} holds with equality for every decomposition. If $\gamma \in (0,1)$ then equality holds in~\eqref{eq:path-independence} if and only if the sub-operations share a common minimizing asset, i.e.,
$\bigcap_{k = 1}^K \argmin_i \rho_i^{(k)} \neq \emptyset$.
\end{lemma}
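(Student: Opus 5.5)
We work with the factorization $g_\w^\gamma(\vecrho) = (1+\alpha(\vecrho))^\gamma\, g_\w^0(\vecrho)^{1-\gamma}$ noted just before the statement, and treat the two factors separately.

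\emph{Step 1: the fee-free factor.} The fee-free factor $g_\w^0(\vecrho)=\prod_i(1+\rho_i)^{w_i}$ is multiplicative over any decomposition. Indeed, substituting $1+\rho_i=\prod_k(1+\rho_i^{(k)})$ and exchanging the products gives
\[
g_\w^0(\vecrho)=\prod_{k=1}^K g_\w^0(\vecrho^{(k)}).
\]
This already settles the case $\gamma=0$, where equality holds for every decomposition.

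\emph{Step 2: the shared proportional factor.} For $\gamma\in(0,1)$ it therefore suffices to compare $1+\alpha(\vecrho)$ with $\prod_k(1+\alpha(\vecrho^{(k)}))$, because the map $t\mapsto t^\gamma$ is strictly increasing on $(0,\infty)$. Pick $i^*\in\argmin_i\rho_i$. Each factor satisfies $1+\rho_{i^*}^{(k)}\ge 1+\alpha(\vecrho^{(k)})>0$, so multiplying these positive inequalities gives
\[
1+\alpha(\vecrho)=\prod_k(1+\rho_{i^*}^{(k)})\ge\prod_k(1+\alpha(\vecrho^{(k)})).
\]
Raising to the power $\gamma$ and multiplying by the fee-free identity raised to the power $1-\gamma$ yields~\eqref{eq:path-independence}.

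\emph{Step 3: the equality case.} This is the main point requiring care. For the \emph{if} direction, suppose $j\in\bigcap_k\argmin_i\rho_i^{(k)}$. Then $1+\rho_j=\prod_k(1+\alpha(\vecrho^{(k)}))$. For any $i$, $1+\rho_i=\prod_k(1+\rho_i^{(k)})\ge\prod_k(1+\alpha(\vecrho^{(k)}))=1+\rho_j$, so $j$ minimizes $\rho$ and equality holds.

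For the \emph{only if} direction, suppose equality holds with $\gamma>0$. Then $1+\rho_{i^*}=\prod_k(1+\alpha(\vecrho^{(k)}))$ for $i^*\in\argmin_i\rho_i$. Since each factor obeys $1+\rho_{i^*}^{(k)}\ge1+\alpha(\vecrho^{(k)})>0$, equality of the products forces equality in every factor. Hence $i^*\in\argmin_i\rho_i^{(k)}$ for all $k$, and the intersection is nonempty.

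The only subtlety is the strict positivity used to pass from equality of products to factorwise equality; it follows from $\vecrho^{(k)}\in(-1,\infty)^N$.
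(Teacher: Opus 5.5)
Your proof is correct and takes the same route as the paper's proof. Both write $g_\w^\gamma$ as the $\gamma$-power of the shared-scaling factor times the multiplicative fee-free factor, and both use $\min_i \prod_k (1+\rho_i^{(k)}) \ge \prod_k \min_i (1+\rho_i^{(k)})$, with equality exactly when there is a common minimizer. You add detail the paper leaves implicit: you prove the $\gamma=0$ case directly instead of citing \cite{bichuch2026axioms}, and you give the factorwise-equality argument, using strict positivity, for both directions of the equality case.
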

\begin{proof}
See Appendix~\ref{proof:lemma:path-independence}.
\end{proof}

Notably, this interpolation between the minimal asset and the fee-free G3M allows us to utilize Corollary~\ref{cor:tracking} in order to define the mis-weighting that the fees introduce. In particular, following this construction, the realized weights (under no-arbitrage) satisfy $\hat\w \in (1-\gamma)\w + \gamma\cl\W$ and, in particular, $\hat w_i \geq (1-\gamma)w_i$ for every asset $i$. This is formally proven within the following theorem in which it is assumed that arbitrageurs act to maximize the value extracted from the pool subject to the fees charged, i.e.,
\[\max_{\Delta\x \in \prod_{i = 1}^N (-x_i,\infty)} \left\{-\s^\top\Delta\x \; | \; g_\w^\gamma(\diag(\x)^{-1}\Delta\x) = 1\right\}.\]

\begin{theorem}\label{thm:tracking-fees}
Fix the fee parameter $\gamma \in [0,1)$ and let the realized weights of an LP share be given by $\hat{w}_i := \frac{s_i x_i}{\s^\top\x}$ for every asset $i$ for external prices $\s \in \R^N_{++}$. 
No profitable arbitrage trade occurs if and only if $\hat{w}_i \geq (1-\gamma)w_i$ for every asset $i$.
\end{theorem}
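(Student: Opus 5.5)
Reformulate the arbitrage problem in the proportional coordinates $\vecrho = \diag(\x)^{-1}\Delta\x$. With $\hat w_i = s_i x_i/\s^\top\x$, the value extracted is $-\s^\top\Delta\x = -(\s^\top\x)\sum_i \hat w_i \rho_i$. A profitable arbitrage therefore exists if and only if some $\vecrho \in (-1,\infty)^N$ satisfies $g_\w^\gamma(\vecrho) \geq 1$ and $\sum_i \hat w_i\rho_i < 0$. We may relax the constraint from equality to $\geq 1$ because $g_\w^\gamma$ is continuous and strictly increasing in each $\rho_i$, so any surplus liquidity can be returned by lowering some $\rho_i$, which only increases the profit. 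Taking logarithms, the feasible set is
\[C := \{\vecrho : \gamma\log(1+\min_i\rho_i) + (1-\gamma)\textstyle\sum_i w_i\log(1+\rho_i) \geq 0\}.\]
This set is convex, since $\log(1+\min_i\rho_i)$ is the minimum of concave functions and hence concave. The question is then whether the linear functional $\hat\w^\top\vecrho$ is nonnegative on $C$. Since $0 \in C$ gives value $0$, no arbitrage is equivalent to $\vecrho = 0$ minimizing $\hat\w^\top\vecrho$ over $C$. I would characterize this via the supergradient of the concave function $G(\vecrho) := \log g_\w^\gamma(\vecrho)$ at $0$.

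\textbf{Sufficiency.} Suppose $\hat w_i \geq (1-\gamma)w_i$ for all $i$. Set $\lambda_i := (\hat w_i - (1-\gamma)w_i)/\gamma \geq 0$ for $\gamma > 0$. Then $\sum_i\lambda_i = 1$, because $\sum_i \hat w_i = \sum_i w_i = 1$. For any $\vecrho \in C$, use $\log(1+t) \leq t$ and $\min_i\rho_i \leq \sum_i \lambda_i\rho_i$ to get
\[0 \leq G(\vecrho) \leq \gamma\textstyle\sum_i\lambda_i\rho_i + (1-\gamma)\sum_i w_i\rho_i = \hat\w^\top\vecrho.\]
Hence $-\s^\top\Delta\x \leq 0$ for every feasible trade, and no profitable arbitrage exists. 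The case $\gamma = 0$ forces $\hat\w = \w$ and reduces to the same inequality, consistent with Corollary~\ref{cor:tracking}.

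\textbf{Necessity.} Suppose $\hat w_j < (1-\gamma)w_j$ for some $j$. I would build an explicit profitable trade: deposit a small amount of asset $j$ and withdraw proportionally from all assets. Concretely, take $\rho_j = t + \epsilon$ and $\rho_i = t$ for $i \neq j$, with $t < 0$ small. Then $\alpha(\vecrho) = t$, and to first order
\[G(\vecrho) \approx t + (1-\gamma)w_j\epsilon.\]
Choosing $t = -(1-\gamma)w_j\epsilon$ keeps $G(\vecrho) \geq 0$ up to second-order terms, which are handled by the exact relation below. The value functional is then
\[\hat\w^\top\vecrho = t + \hat w_j\epsilon = \epsilon\bigl(\hat w_j - (1-\gamma)w_j\bigr) < 0,\]
which is a strict profit for the arbitrageur. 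To make this exact rather than first order, solve $g_\w^\gamma = 1$ exactly:
\[(1+t)\bigl(1+\epsilon/(1+t)\bigr)^{(1-\gamma)w_j} = 1.\]
Expanding for small $\epsilon$ shows that the profit is $\epsilon\bigl((1-\gamma)w_j - \hat w_j\bigr) + O(\epsilon^2)$, which is positive for $\epsilon$ small.

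The main obstacle is bookkeeping rather than concept. The point is that the only supergradients of $\log(1+\min_i\rho_i)$ at $0$ are probability vectors $\vecmu$, so the superdifferential of $G$ at $0$ is exactly $(1-\gamma)\w + \gamma\cl\W$. The theorem then amounts to the statement that $\hat\w$ lies in this superdifferential, and the weight condition in the theorem is precisely membership in that set. The other care point is the positivity constraints $\rho_i > -1$, which are harmless for small perturbations.
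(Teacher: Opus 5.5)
Your proof is correct, and it takes a partly different route from the paper's. The paper works in log coordinates $\u = \log\z$ with an epigraph variable $t \leq u_i$ and argues entirely through the KKT system of that linearly constrained convex program. Stationarity at an optimizer gives $\hat w_i = (1-\gamma)w_i + \mu_i/\lambda$ with $\vecmu \geq 0$. Conversely, $(\u,t) = (\vec{0},0)$ with $\lambda = \s^\top\x$ and $\vecmu = \lambda(\hat\w - (1-\gamma)\w)$ is checked to satisfy KKT.

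Your sufficiency direction uses the same certificate: your $\lambda_i$ is the paper's $\mu_i/(\gamma\lambda)$. You verify it by hand via $\log(1+x) \leq x$ and $\min_i\rho_i \leq \sum_i\lambda_i\rho_i$, i.e., you write out the weak-duality inequality instead of citing KKT sufficiency.

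Your necessity direction is genuinely different. Rather than reading the bound off stationarity, you exhibit an explicit profitable pure swap: deposit the underweight asset $j$ and withdraw proportionally from the rest. Its profit is $(\s^\top\x)\,\epsilon\bigl((1-\gamma)w_j - \hat w_j\bigr) + O(\epsilon^2)$. The remainder claim is justified by the implicit function theorem, since the exact constraint has $t$-derivative $1$ at the origin. More simply, you can check that $\log g_\w^\gamma > 0$ at $t = -\hat w_j\epsilon$ for small $\epsilon$, and then use your monotonicity relaxation.

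Your route is elementary, avoids KKT theory altogether, and identifies the arbitrage trade itself, which is economically informative. The paper's route additionally yields the identity $\hat\w - (1-\gamma)\w = \vecmu/\lambda$ at \emph{any} optimal arbitrage. So it shows directly that a single optimal trade lands the pool inside the band of Corollary~\ref{cor:tracking-fees}. From your version this needs one more step, e.g., using the super-multiplicativity in Lemma~\ref{lemma:path-independence} to show that no profitable trade survives an optimal one. Your closing superdifferential observation, $\partial(\log g_\w^\gamma)(\vec{0}) = (1-\gamma)\w + \gamma\cl\W$, is exactly the conceptual content of the paper's multipliers.
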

\begin{proof}
See Appendix~\ref{proof:thm:tracking-fees}.
\end{proof}

\begin{corollary}\label{cor:tracking-fees}
The no-arbitrage condition of Theorem~\ref{thm:tracking-fees} imposes that, under active arbitrage, the realized weights live in the simplex $\hat\w \in (1-\gamma)\w + \gamma \cl\W$. In particular, this implies that asset $i$ lives within the mis-weighting band $\hat{w}_i \in [(1-\gamma)w_i \, , \, \gamma + (1-\gamma)w_i]$.
\end{corollary}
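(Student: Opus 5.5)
The corollary follows from Theorem~\ref{thm:tracking-fees} together with the normalization $\sum_i \hat w_i = 1$. I would prove the simplex inclusion first and then read off the per-asset band.

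Under active arbitrage no profitable trade remains, so Theorem~\ref{thm:tracking-fees} gives $\hat w_i \geq (1-\gamma)w_i$ for every asset $i$. For $\gamma = 0$ this together with $\sum_i \hat w_i = 1 = \sum_i w_i$ forces $\hat\w = \w$, which agrees with Corollary~\ref{cor:tracking}. For $\gamma \in (0,1)$, define
\[
\u := \frac{\hat\w - (1-\gamma)\w}{\gamma}.
\]
Each component satisfies $u_i \geq 0$ by the theorem. Summing the components gives
\[
\sum_i u_i = \frac{1 - (1-\gamma)}{\gamma} = 1,
\]
since both $\hat\w$ and $\w$ sum to one. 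Because $\cl\W = \{\u \in \R^N_+ \mid \sum_i u_i = 1\}$, it follows that $\hat\w = (1-\gamma)\w + \gamma\u \in (1-\gamma)\w + \gamma\cl\W$.

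For the band, the lower bound $\hat w_i \geq (1-\gamma)w_i$ is immediate from the theorem. For the upper bound, use $u_i \leq 1$, which holds because $\u$ lies in the closed simplex. Then $\hat w_i = (1-\gamma)w_i + \gamma u_i \leq (1-\gamma)w_i + \gamma$. An equivalent route is to sum the lower bounds over $j \neq i$, giving $\hat w_i = 1 - \sum_{j\neq i}\hat w_j \leq 1 - (1-\gamma)(1-w_i) = \gamma + (1-\gamma)w_i$.

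I do not expect a real obstacle here. The only point needing care is the modeling step: "active arbitrage" must be read as saying that after each external price move, arbitrageurs trade until no profitable trade remains. Under that reading, Theorem~\ref{thm:tracking-fees} applies to the observed holdings at every observation time.
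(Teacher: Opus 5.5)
Your proof is correct and essentially matches the paper's. The paper obtains $\hat\w - (1-\gamma)\w = \vecmu/\lambda \in \gamma\cl\W$ from the KKT multipliers in the proof of Theorem~\ref{thm:tracking-fees}, using $\vecmu \in \R^N_+$ and $\sum_i \mu_i = \lambda\gamma$; this is the same nonnegativity-plus-normalization fact you derive from the theorem's lower bounds and $\sum_i \hat w_i = 1$, with the multiplier form absorbing your separate $\gamma = 0$ case, and your second route to the upper bound (summing the lower bounds over $j \neq i$) is exactly the paper's argument.
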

\begin{proof}
See Appendix~\ref{proof:cor:tracking-fees}.
\end{proof}

\begin{remark}
Corollary~\ref{cor:tracking-fees} bounds the allocation drift $\sum_{i = 1}^N |\hat{w}_i - w_i| \leq \gamma \sup_{\v \in \cl\W} \|\v - \w\|_1 = 2\gamma(1-\min_i w_i)$ contingent on competitive arbitrage. This metric was studied empirically in~\cite{willetts2026pools} for dynamic-weight pools under a different fee construction.
\end{remark}

\section{Empirical Portfolio Performance}\label{sec:cs}

\begin{table}[b]
\centering
\begin{tabular}{|c|c|c|c|}
\hline
\textbf{Fund} & \textbf{Valuation} & \textbf{Mandate} & \textbf{G3M Dominance Region} \\ \hline\hline
\textbf{VBIAX} & \textbf{NAV} & Economic (Daily TE) & $\emptyset$ \\ \hline
\textbf{VBIAX} & \textbf{NAV} & Economic (Monthly TE) & $\gamma \in [2.73\%,3.90\%]$ \\ \hline\hline
\textbf{EQL} & \textbf{NAV} & Legal & $\emptyset$ \\ \hline
\textbf{EQL} & \textbf{Market Value} & Legal & $\gamma \in [3.56\%,10\%]^\dagger$ \\ \hline
\textbf{EQL} & \textbf{NAV} & Economic & $\gamma \in [3.22\%,7.09\%]$ \\ \hline
\textbf{EQL} & \textbf{Market Value} & Economic & $\gamma \in [3.56\%,10\%]^\dagger$ \\ \hline\hline
\textbf{EDOW} & \textbf{NAV} & Legal & $\emptyset$ \\ \hline
\textbf{EDOW} & \textbf{Market Value} & Legal & $\gamma \in [3.44\%,10\%]^\dagger$ \\ \hline
\textbf{EDOW} & \textbf{NAV} & Economic & $\gamma \in [3.32\%,9.90\%]$ \\ \hline
\textbf{EDOW} & \textbf{Market Value} & Economic & $\gamma \in [3.44\%,10\%]^\dagger$ \\ \hline
\end{tabular}
\caption{Summary table of dominance regions for the G3M over traditional funds. Except where otherwise noted, the TE is computed daily.\\
$\dagger$: End-point of 10\% is the upper range of sampled fees.}
\label{tab:dominance}
\end{table}

Section~\ref{sec:g3m} introduced the theory for the G3M as a band-rebalanced target-weighted fund. In this section, we want to explore the actual practicality of these designs in practice over the range of $\gamma \in [0\%,10\%]$. To do so, we follow the logic of~\cite{willetts2024rebalancing} but consider funds with substantial assets under management as the comparison benchmark. Specifically, we will study a 60-40 fund (VBIAX, Section~\ref{sec:cs-vbiax}), an equal-weighted sector fund (EQL, Section~\ref{sec:cs-eql}), and an equal-weighted DJIA fund (EDOW, Section~\ref{sec:cs-edow}). In contrast to~\cite{willetts2024rebalancing}, we consider the performance of each portfolio construction under two metrics:
\begin{itemize}
\item Fidelity to the mandate as measured by the annualized tracking error (TE) in which we consider two target constructions:
    \begin{itemize}
    \item the underlying index being tracked by the fund;\footnote{VBIAX only includes a single target as it is a mix of two separate indices.}
    \item the daily rebalancing of the underlying \emph{index} positions suggested by the fund's economic description.
    \end{itemize}
The latter target is to measure fidelity against the literal economic mandate while the former is the fund's legal mandate.
\item Compound annual growth rate (CAGR).
\end{itemize}
Importantly, a portfolio is only dominant if it outperforms in both metrics simultaneously (lower TE with higher CAGR). As far as we are aware, no prior work has quantified the TE of an AMM-generated position.
Table~\ref{tab:dominance} summarizes the resulting dominance regions based on G3M fees.

For each of these back-tested positions, we consider the following conventions. The G3M pools are subjected only to arbitrage flow from agents who are able to frictionlessly transact on external markets, thus providing a conservative bound on the fees generated by the order flow (e.g., from uninformed traders). Furthermore, the arbitrageurs transact at most once-per-day at the market close price. Any dividends or other income issued by a position are assumed to be reinvested directly into the asset that paid the income; in this way, total return series for underlying assets and indices can be consistently used throughout.\footnote{All data was obtained via Bloomberg.}

\subsection{VBIAX: 60\% Equity / 40\% Fixed Income}\label{sec:cs-vbiax}

\begin{figure}[t]
\centering
\begin{subfigure}[t]{0.47\textwidth}
\centering
\includegraphics[width=\textwidth]{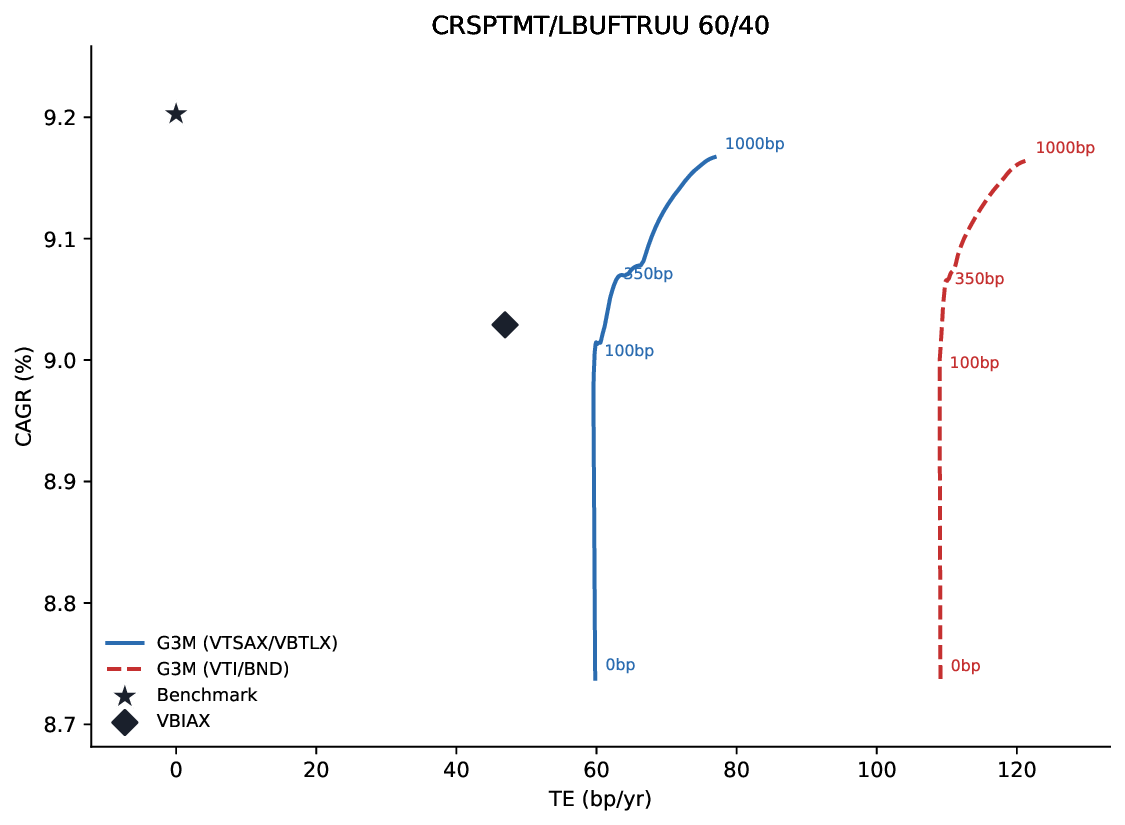}
\caption{Daily TE}
\label{fig:vbiax-daily}
\end{subfigure}
~
\begin{subfigure}[t]{0.47\textwidth}
\centering
\includegraphics[width=\textwidth]{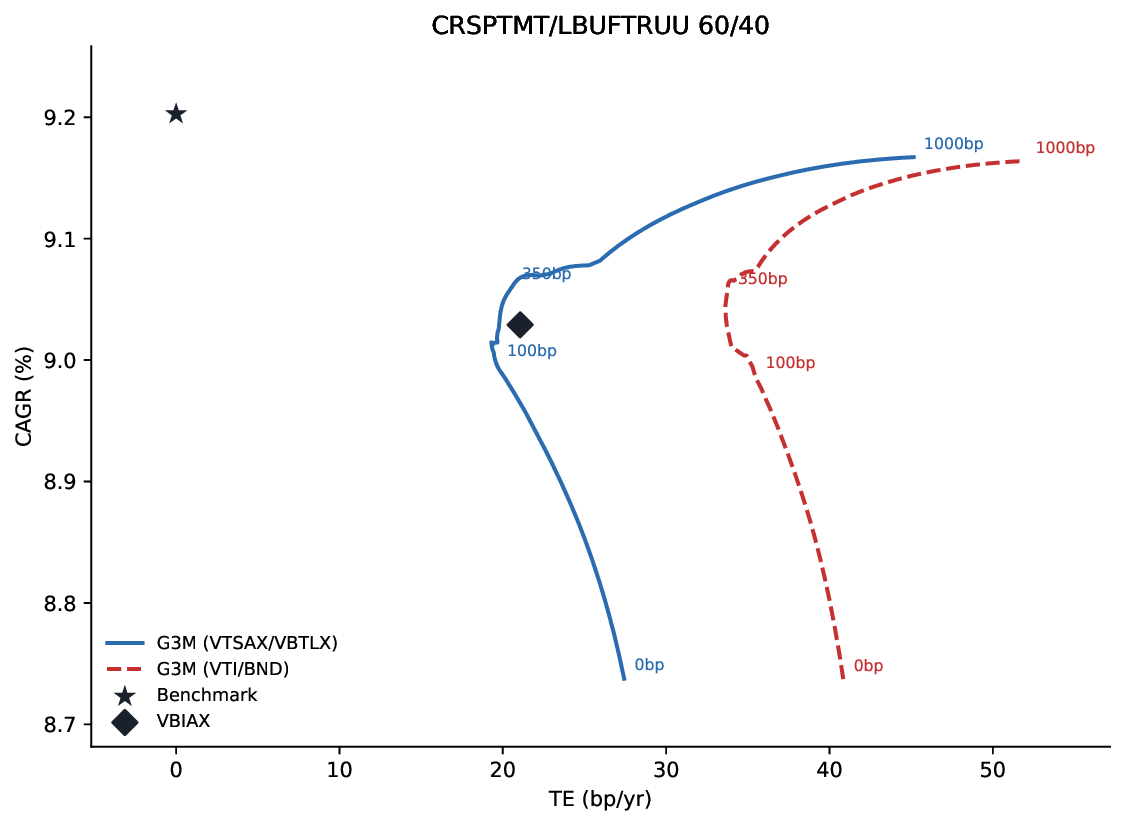}
\caption{Monthly TE}
\label{fig:vbiax-monthly}
\end{subfigure}
\caption{Performance of the G3M against VBIAX on tracking a 60-40 portfolio from January 2, 2014 through June 30, 2026.}
\label{fig:vbiax}
\end{figure}

Consider the VBIAX mutual fund which holds over \$50 billion in assets under management\footnote{\url{https://advisors.vanguard.com/investments/products/vbiax/vanguard-balanced-index-fund-admiral-shares}} which targets holdings that correspond to 60\% equity (Morningstar US Total Market Index, CRSPTMT) and 40\% fixed income (Bloomberg US Aggregate Float Adjusted Index, LBUFTRUU). Though maintaining a consistent economic mandate since its inception, the underlying equity and fixed income indices tracked were modified in 2013; as such, for the purposes of this study we will consider market data from January 2014 through June 2026. As noted in a footnote in the section introduction, herein we consider just the single benchmark constructed from a daily rebalancing of the total return series of CRSPTMT and LBUFTRUU. 

In order to target this same 60-40 portfolio mandate, we consider two different sets of underlying positions for the G3M simulated positions:
\begin{enumerate}
\item the all equity (VTSAX) and all bond (VBTLX) mutual funds; and
\item the all equity (VTI) and all fixed income (BND) ETFs.
\end{enumerate}
Figure~\ref{fig:vbiax} displays the frontier of CAGR vs.\ TE for these simulated G3M positions over the range of fees $\gamma \in [0\%,10\%]$. Notably, we consistently find that the mutual-fund G3M has lower TE than the ETF G3M; we conjecture the increased TE observed by the ETFs is generated by market noise that is not observable in the daily NAV of a mutual fund. While the daily computation of the TE indicates that the G3M never matches the TE of the actual fund VBIAX (Figure~\ref{fig:vbiax-daily}), the monthly computation of TE shows that fees within the range $[2.73\%,3.90\%]$ demonstrate Pareto dominance over VBIAX. This difference in performance based on TE frequency is consistent with the negative autocorrelations observed in the daily active returns, thus leading to an overestimated TE for all funds.

\subsection{EQL: Equal Sector Investment}\label{sec:cs-eql}

\begin{figure}[t]
\centering
\begin{subfigure}[t]{0.47\textwidth}
\centering
\includegraphics[width=\textwidth]{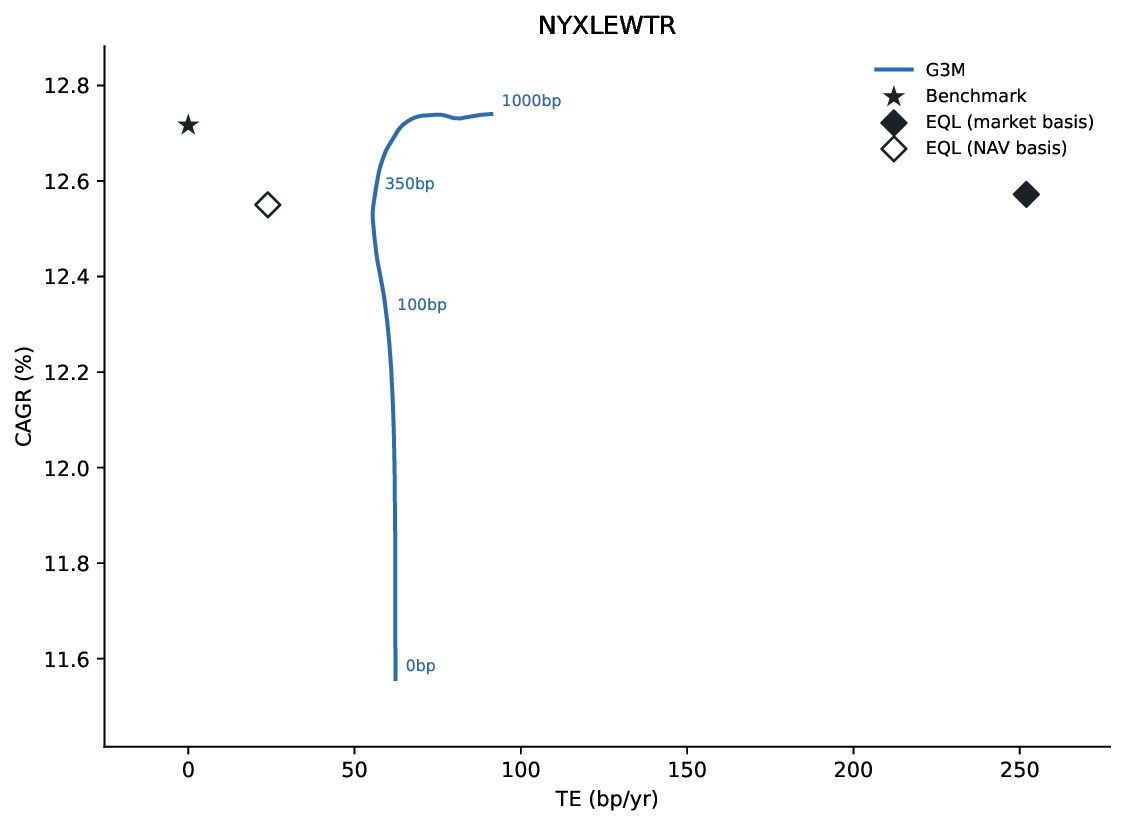}
\caption{TE against the underlying index}
\label{fig:eql-index}
\end{subfigure}
~
\begin{subfigure}[t]{0.47\textwidth}
\centering
\includegraphics[width=\textwidth]{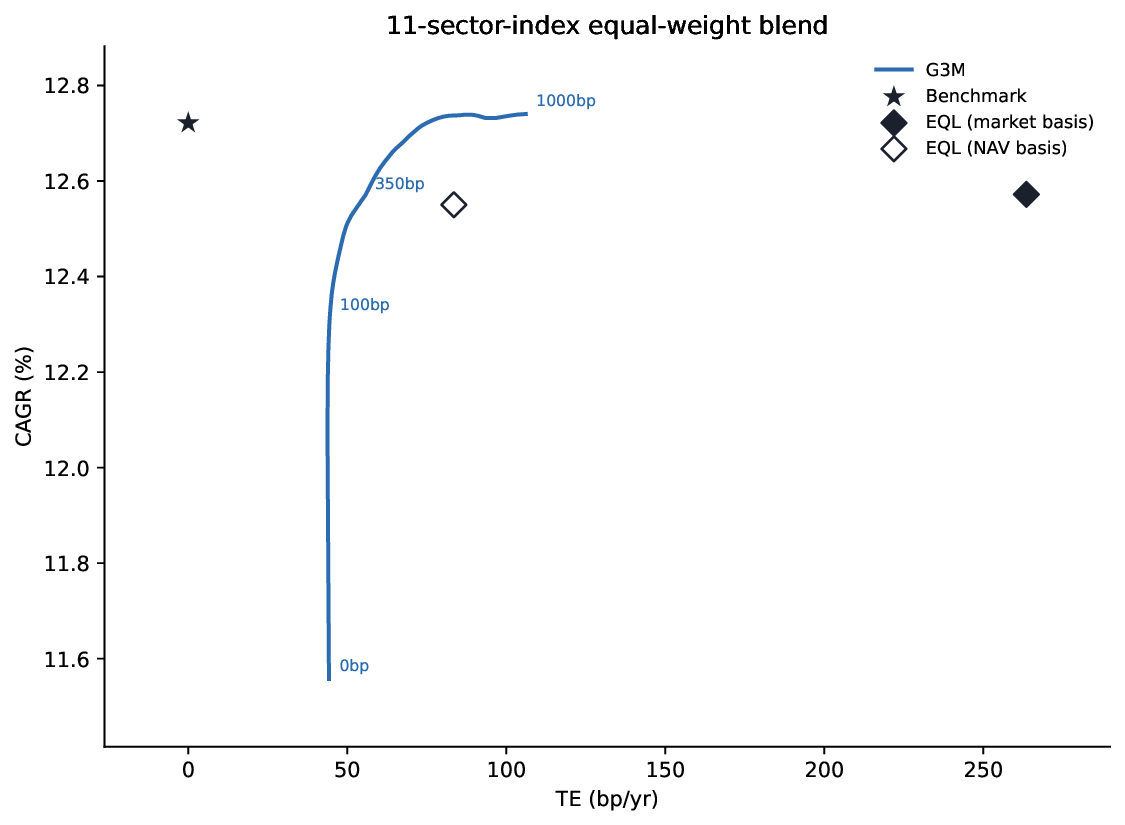}
\caption{TE against the economic mandate}
\label{fig:eql-rebalance}
\end{subfigure}
\caption{Performance of the G3M against EQL on tracking an equal-sector portfolio from June 19, 2018 through May 29, 2026.}
\label{fig:eql}
\end{figure}

Consider the EQL ETF which targets an equal-weighted position in all 11 GICS economic sectors (NYXLEWTR for the total return index). Though maintaining a consistent economic mandate since its inception, the 11th sector was added in mid-2018 while EQL adjusted its target index in mid-2026. As such, for the purposes of this study we will consider market data from June 19, 2018 through May 29, 2026. Notably, we assess the performance of EQL in two ways: (i) based on its market performance and (ii) based on the performance of the NAV. While the market performance is what an investor would realize, the NAV provides a closer approximation to the NAV measurement being used for the G3M position.

In order to assess this same equal-sector portfolio mandate, we consider a G3M holding all 11 Sector SPDR ETFs. Figure~\ref{fig:eql} displays the frontier of CAGR vs.\ TE when considering the two separate benchmarks highlighted in the introduction to this section. First, in Figure~\ref{fig:eql-index}, we observe that the G3M is able to outperform the market performance of EQL in both CAGR and TE for sufficiently high fees, but cannot match the TE of the NAV. However, when the benchmark switches to the daily rebalancing of the underlying index positions (the corresponding sector indices), the G3M can dominate even the NAV position in both CAGR and TE for fees between $[3.22\%,7.09\%]$. While these TE measurements consider daily computations, the monthly figures look comparable and, as such, are not displayed herein.

\subsection{EDOW: Equal-Weighted DJIA}\label{sec:cs-edow}

\begin{figure}[t]
\centering
\begin{subfigure}[t]{0.47\textwidth}
\includegraphics[width=\textwidth]{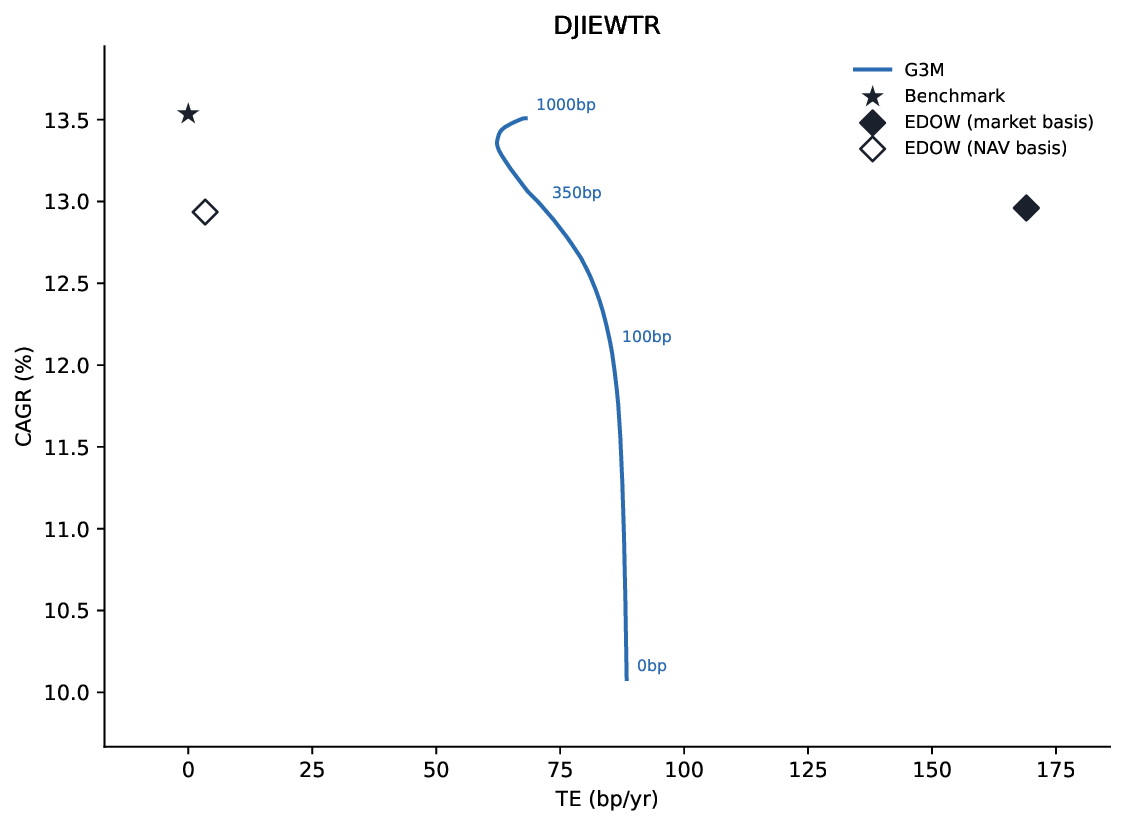}
\caption{TE against the underlying index}
\label{fig:edow-index}
\end{subfigure}
~
\begin{subfigure}[t]{0.47\textwidth}
\includegraphics[width=\textwidth]{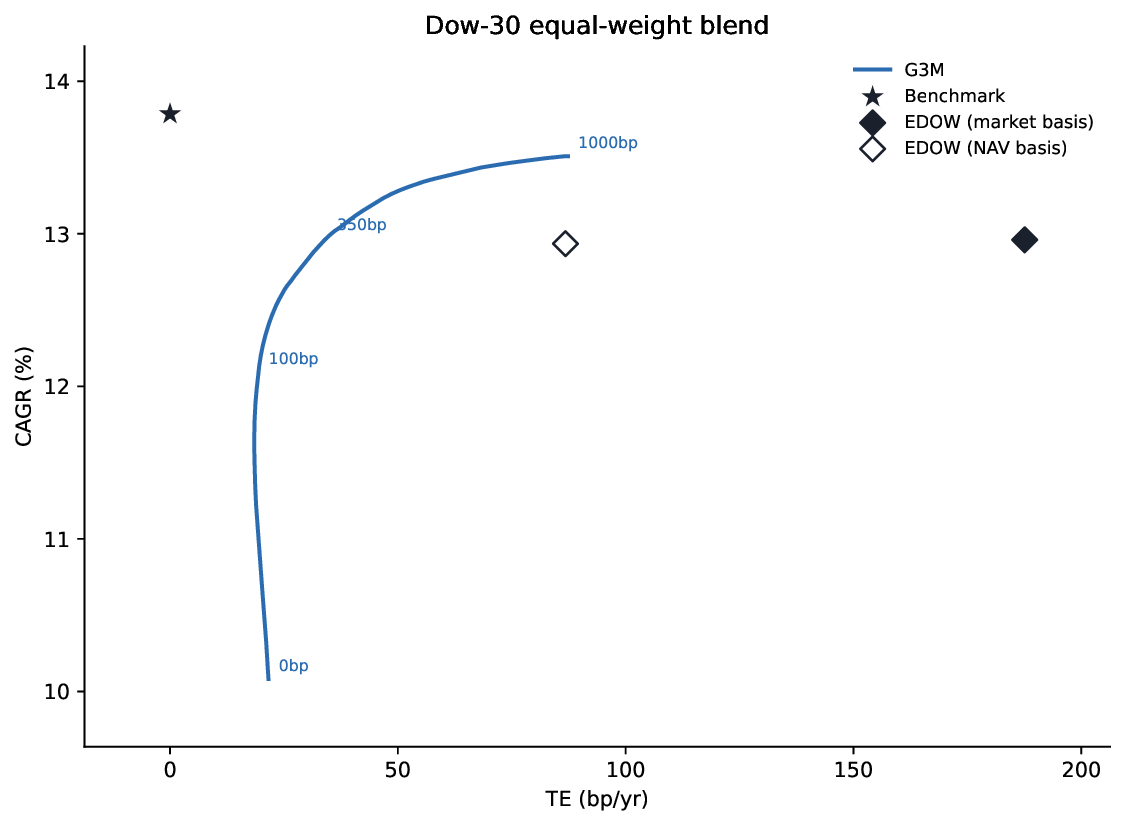}
\caption{TE against the economic mandate}
\label{fig:edow-rebalance}
\end{subfigure}
\caption{Performance of the G3M against EDOW on tracking an equal-weighted DJIA portfolio from November 11, 2024 through June 29, 2026.}
\label{fig:edow}
\end{figure}

Consider the EDOW ETF which targets an equal-weighted position in the 30 stocks that make up the DJIA index (DJIEWTR for the total return index). To eliminate concerns with changing components, we will consider market data from November 11, 2024 to June 29, 2026 during which time the DJIA components were held steady. As with EQL, we assess the performance of EDOW in two ways: (i) based on its market performance and (ii) based on the performance of the NAV.

In order to assess this same equal-weight portfolio mandate, we consider a G3M holding all 30 components of the DJIA. Figure~\ref{fig:edow} displays the frontier of CAGR vs.\ TE when considering the two separate benchmarks highlighted in the introduction to this section. First, in Figure~\ref{fig:edow-index}, we observe that the G3M is able to outperform the market performance of EDOW in both CAGR and TE for sufficiently high fees, but cannot outperform the TE of the NAV. However, when the benchmark switches to the daily rebalancing of the underlying components, the G3M can dominate even the NAV position in both CAGR and TE for fees between $[3.32\%,9.90\%]$. While these TE measurements consider daily computations, the monthly figures look comparable and, as such, are not displayed herein.

\section{Discussion}\label{sec:discussion}

While the theory of Section~\ref{sec:g3m} provides the interpretation of a G3M as a programmatic implementation of a target-weighted portfolio mandate, Section~\ref{sec:cs} demonstrates that this position can be competitive with existing funds. Notably, those existing funds (VBIAX, EQL, and EDOW) have real assets under management which reveal economic demand for the mandate. Therefore, though the prevailing literature asks whether liquidity provision is profitable when viewed as a stand-alone market-making activity, we find that the AMM position can deliver a sufficiently valuable portfolio-management service so as to be economically relevant on its own. From this perspective, the invariant acts as a commitment device, the portfolio is mechanically rebalanced toward its stated mandate without managerial discretion, subject to the no-arbitrage region characterized in Theorem~\ref{thm:tracking-fees}. 

Under this interpretation, the loss-versus-rebalancing (LVR)~\cite{milionis2022automated} takes on a new meaning. The economic transfer from LPs to arbitrageurs is not simply adverse selection, but also the execution costs of maintaining the portfolio mandate and implementing the necessary rebalancing. As in~\cite{willetts2024rebalancing}, the appropriate comparison is not between the AMM and a hypothetical dealer that faces no inventory objective, but between the total performance of the AMM position and the performance of alternative mechanisms for maintaining the same portfolio exposure. As all of the simulated G3M positions in Section~\ref{sec:cs} are generated using only arbitrage flow, the market-making position, measured by fees minus LVR, is non-positive by construction. Nevertheless, the total G3M position can outperform incumbent funds.\footnote{Intriguingly, and subject to future study, within the dominating fee ranges reported in Table~\ref{tab:dominance}, the realized LVR (net of fee revenue) annualizes to a cost of the same order of magnitude as the expense ratio of the corresponding incumbent fund.} The relatively high dominating fee ranges reported in Table~\ref{tab:dominance} (in comparison to fees typically offered by AMMs) are better interpreted as parameters of the portfolio rebalancing band (Corollary~\ref{cor:tracking-fees}) than as competitive exchange spreads.

Furthermore, the empirical results highlight an important distinction between continuous weight fidelity and realized mandate fidelity. Corollary~\ref{cor:tracking-fees} shows that increasing the fee parameter $\gamma$ expands the no-arbitrage region and, therefore, permits larger deviations from the target portfolio weights. However, this does not imply that the realized tracking error increases monotonically with $\gamma$. The $\gamma = 0$ G3M maintains the target weights perfectly, but transfers the full value of each rebalancing opportunity to the arbitrageurs. Positive fees permit temporary mis-weighting while allowing LPs to retain a portion of the proceeds from any rebalancing. This tradeoff produces the non-monotonic frontiers observed in Section~\ref{sec:cs}. 

The comparison between the G3M and traditional index funds is, of course, not immediate. As illustrated by EQL and EDOW (Sections~\ref{sec:cs-eql} and~\ref{sec:cs-edow}), conventional funds are designed to track legally specified indices, including their periodic rebalancing rules. A G3M instead implements the target weights encoded directly in its invariant. The legal index and the continuously maintained target-weight portfolio are, therefore, distinct interpretations of the same portfolio mandate. 

Economically, the G3M resembles an ETF with a custom-basket rule. Although a comparable mandate could in principle be implemented within a traditional fund structure, continuously administering and documenting compliance with such a rule can impose operational costs, which may ultimately be reflected in the fund's expense ratio. With tokenized assets held in a G3M pool, both the mandate and its permissible rebalancing region are encoded directly into the mechanism. Given observable reserves and contemporaneous asset prices, investors and regulators can assess compliance against the bounds in Corollary~\ref{cor:tracking-fees}. Together with the empirical performance found in Section~\ref{sec:cs}, this supports evaluating AMMs not only as decentralized exchanges, but also as verifiable portfolio infrastructure.

\bibliographystyle{plain}
\bibliography{biblio.bib}

\begin{thebibliography}{10}

\bibitem{angeris2020improved}
Guillermo Angeris and Tarun Chitra.
\newblock Improved price oracles: Constant function market makers.
\newblock In {\em Proceedings of the 2nd ACM Conference on Advances in
  Financial Technologies}, pages 80--91, 2020.

\bibitem{angeris2023geometry}
Guillermo Angeris, Tarun Chitra, Theo Diamandis, Alex Evans, and Kshitij
  Kulkarni.
\newblock The geometry of constant function market makers.
\newblock {\em arXiv preprint arXiv:2308.08066}, 2023.

\bibitem{bichuch2026axioms}
Maxim Bichuch and Zachary Feinstein.
\newblock Axioms for automated market makers: A mathematical framework in
  fintech and decentralized finance.
\newblock {\em Operations Research}, 74(3):1187--1202, 2026.

\bibitem{el2023rebalancing}
Rim El~Bernoussi and Michael Rockinger.
\newblock Rebalancing with transaction costs: Theory, simulations, and actual
  data.
\newblock {\em Financial Markets and Portfolio Management}, 37(2):121--160,
  2023.

\bibitem{evans2021liquidity}
Alex Evans.
\newblock Liquidity provider returns in geometric mean markets.
\newblock 2021.

\bibitem{evans2021optimal}
Alex Evans, Guillermo Angeris, and Tarun Chitra.
\newblock Optimal fees for geometric mean market makers.
\newblock In {\em International Conference on Financial Cryptography and Data
  Security}, pages 65--79. Springer, 2021.

\bibitem{martinelli2019non}
Fernando Martinelli and Nikolai Mushegian.
\newblock A non-custodial portfolio manager, liquidity provider, and price
  sensor.
\newblock \url{https://docs.balancer.fi/whitepaper.pdf}, 2019.

\bibitem{milionis2022automated}
Jason Milionis, Ciamac~C Moallemi, Tim Roughgarden, and Anthony~Lee Zhang.
\newblock Automated market making and loss-versus-rebalancing.
\newblock {\em arXiv preprint arXiv:2208.06046}, 2022.

\bibitem{schachermayer2004fundamental}
Walter Schachermayer.
\newblock The fundamental theorem of asset pricing under proportional
  transaction costs in finite discrete time.
\newblock {\em Mathematical Finance: An International Journal of Mathematics,
  Statistics and Financial Economics}, 14(1):19--48, 2004.

\bibitem{willetts2024optimal}
Matthew Willetts and Christian Harrington.
\newblock Optimal rebalancing in dynamic amms.
\newblock {\em arXiv preprint arXiv:2403.18737}, 2024.

\bibitem{willetts2024rebalancing}
Matthew Willetts and Christian Harrington.
\newblock Rebalancing-versus-rebalancing: Improving the fidelity of
  loss-versus-rebalancing.
\newblock {\em arXiv preprint arXiv:2410.23404}, 2024.

\bibitem{willetts2026pools}
Matthew Willetts and Christian Harrington.
\newblock Pools as portfolios: Observed arbitrage efficiency \& lvr analysis of
  dynamic weight amms.
\newblock {\em arXiv preprint arXiv:2602.22069}, 2026.

\end{thebibliography}

\appendix
\section{Proofs}
\subsection{Proof of Lemma~\ref{lemma:path-independence}}\label{proof:lemma:path-independence}
\begin{proof}
First, we refer to~\cite{bichuch2026axioms} for the path independence of the $\gamma = 0$ setting. For the remainder of this proof, we will assume $\gamma \in (0,1)$.

Let $\z := \vec{1} + \vecrho \in \R^N_{++}$ and $\z^{(k)} := \vec{1} + \vecrho^{(k)} \in \R^N_{++}$ for any $k$. In this way, the decomposition satisfies $z_i = \prod_{k = 1}^K z_i^{(k)}$ for every asset $i$. Note also that $\min_i z_i = 1 + \alpha(\vecrho)$ and $\argmin_i z_i^{(k)} = \argmin_i \rho_i^{(k)}$ for every $k$. With this notation,~\eqref{eq:g} becomes
$g_\w^\gamma(\vecrho) = (\min_i z_i)^\gamma \prod_{i = 1}^N z_i^{(1-\gamma)w_i}$.
Immediately, it follows that
\begin{align*}
g_\w^\gamma(\vecrho) &= \left(\min_i z_i\right)^\gamma\prod_{i = 1}^N z_i^{(1-\gamma)w_i} \\
&\geq \left(\prod_{k = 1}^K \min_i z_i^{(k)}\right)^\gamma \prod_{k = 1}^K \left[\prod_{i = 1}^N (z_i^{(k)})^{(1-\gamma)w_i}\right] \\
&= \prod_{k = 1}^K \left[\left(\min_i z_i^{(k)}\right)^\gamma \prod_{i = 1}^N (z_i^{(k)})^{(1-\gamma)w_i} \right].
\end{align*}
Furthermore, equality holds so long as $\min_i z_i = \prod_{k = 1}^K \min_i z_i^{(k)}$; notably this is true if and only if there is a common minimizer across the decompositions $i^* \in \bigcap_{k = 1}^K \argmin_i z_i^{(k)}$ which then is also the minimizer of the original transaction.
\end{proof}

\subsection{Proof of Theorem~\ref{thm:tracking-fees}}\label{proof:thm:tracking-fees}
\begin{proof}
Define $\c := \diag(\s)\x$ to be the component-wise value of the portfolio holdings. For a given arbitrage trade $\Delta\x \in -\x + \R^N_{++}$, we define $\z := \diag(\x)^{-1}[\x+\Delta\x] \in \R^N_{++}$ so that the post-trade reserves in asset $i$ are given by $x_i z_i$. In particular, with this notation, the arbitrageur is equivalently maximizing $\s^\top\x - \c^\top\z$ where the first term is constant with respect to arbitrageur. In this way, we consider the equivalent optimization problem (taking advantage of the monotonicity and positive homogeneity of $g_\w^\gamma$):
\begin{align*}
\min_{\z \in \R^N_{++}} \left\{\c^\top\z \; | \; g_\w^\gamma(\z-\vec{1}) = 1\right\} 
    &= \min_{\z \in \R^N_{++}} \left\{\c^\top\z \; | \; \gamma \min_i \log(z_i) + (1-\gamma)\sum_{i = 1}^N w_i \log(z_i) \geq 0\right\} \\
    &= \min_{\u \in \R^N} \left\{\c^\top\exp(\u) \; | \; \gamma \min_i u_i + (1-\gamma)\sum_{i = 1}^N w_i u_i \geq 0\right\} \\
    &= \min_{\u \in \R^N,t \in \R} \left\{\c^\top \exp(\u) \; | \; \gamma t + (1-\gamma) \sum_i w_i u_i \geq 0, t \leq u_i \forall i\right\}.
\end{align*}
Notably, the final form utilized for the arbitrageur's problem is a convex program for which the KKT conditions are necessary and sufficient for optimality (due to linear-only constraints). Utilizing the Lagrangian of this problem, i.e.,
\begin{align*}
L(\u,t,\lambda,\vecmu) &= \c^\top \exp(\u) - \lambda \left(\gamma t + (1-\gamma) \sum_{i = 1}^N w_i u_i\right) + \sum_i \mu_i (t - u_i) \\
    &= \sum_{i = 1}^N [c_i \exp(u_i) - \underbrace{(\lambda(1-\gamma)w_i + \mu_i)}_{=: \beta_i(\lambda,\mu_i)} u_i] + t \left(\sum_{i = 1}^N \mu_i - \lambda\gamma\right),
\end{align*}
we find the KKT conditions:
    \begin{enumerate}
    \item \emph{Stationarity in $u_i$}: $c_i \exp(u_i) - \beta_i(\lambda,\mu_i) = 0$ for every asset $i$;
    \item \emph{Stationarity in $t$}: $\sum_{i = 1}^N \mu_i - \lambda\gamma = 0$;
    \item \emph{Primal feasibility}: $\gamma t + (1-\gamma) \sum_{i = 1}^N w_i u_i \geq 0,~ t \leq u_i\, \forall i$;
    \item \emph{Dual feasibility}: $\lambda \geq 0,~\vecmu \in \R^N_+$; and
    \item \emph{Complementary slackness}: $\lambda (\gamma t + (1-\gamma) \sum_{i = 1}^N w_i u_i) = 0$ and $\mu_i (t - u_i) = 0$ for every asset $i$.
    \end{enumerate}
Letting $\u^*$ be the outcome of the arbitrageur's optimization, then (following these KKT conditions) the post-arbitrage weights are \[\hat{w}_i = \frac{c_i \exp(u_i^*)}{\c^\top\exp(\u^*)} = \frac{\lambda(1-\gamma)w_i + \mu_i}{\lambda(1-\gamma) + \lambda\gamma} = (1-\gamma)w_i + \frac{\mu_i}{\lambda}.\]
We, further, note that $\lambda = \c^\top\exp(\u^*) > 0$ by these conditions automatically. Therefore, because $\vecmu \in \R^N_+$, it directly follows that $\hat{w}_i \geq (1-\gamma)w_i$ for every asset $i$.

Consider, now, the converse in which the realized weights at $\x$ satisfy the lower weight bounds: $\hat{w}_i \geq (1-\gamma)w_i$ for every asset $i$. We will prove that the arbitrageur's optimal action is to \emph{not} transact in such a setting. For this purpose, consider $\lambda = \sum_{i = 1}^N c_i = \s^\top\x > 0$ and $\vecmu = \lambda(\hat\w - (1-\gamma)\w) \in \R^N_+$. Finally, set $(\u,t) = (\vec{0},0)$. Under these variables, the KKT conditions can all be trivially verified, thus proving the desired result.
\end{proof}

\subsection{Proof of Corollary~\ref{cor:tracking-fees}}\label{proof:cor:tracking-fees}
\begin{proof}
Following the dual variables determined within the proof of Theorem~\ref{thm:tracking-fees}, $\hat\w - (1-\gamma)\w = \vecmu/\lambda \in \gamma\cl\W$. 
Utilizing this simplex structure, the maximal possible value of the weight for any asset $i$ is found when all other assets sit at their minimal no-arbitrage weight, i.e., $(1-\gamma)w_j$ for $j \neq i$. Therefore, $\hat{w}_i \leq 1 - \sum_{j \neq i} (1-\gamma)w_j = \gamma + (1-\gamma) [1 - \sum_{j \neq i} w_j] = \gamma + (1-\gamma) w_i$.
\end{proof}

\end{document}